\newcommand{\es}{\varnothing}
\DeclareMathOperator{\PSPACE}{\mathsf{PSPACE}}
\DeclareMathOperator{\mex}{\mathsf{mex}}
\begin{document}

\title{{\sc $P_3$-Games}}

\author{
Wing-Kai~Hon\inst{1}
\and 
Ton~Kloks\inst{}
\and 
Fu-Hong~Liu\inst{1}
\and 
Hsiang-Hsuan~Liu\inst{1,2}
\and
Tao-Ming~Wang\inst{3}
}

\institute{National Tsing Hua University, Hsinchu, Taiwan\\
{\tt (wkhon,fhliu,hhliu)@cs.nthu.edu.tw}
\and
University of Liverpool, Liverpool, United Kingdom\\
{\tt hhliu@liverpool.ac.uk}
\and
Tunghai University, Taichung, Taiwan\\
{\tt wang@go.thu.edu.tw}
}

\maketitle 

\begin{abstract}
Without further ado, we present the $P_3$-game. The $P_3$-game 
is decidable for elementary classes of graphs 
such as paths  and cycles. 
{F}rom an algorithmic point of 
view, the connected $P_3$-game is fascinating. We show that the 
connected $P_3$-game is polynomially decidable for classes 
such as trees, chordal graphs, ladders, cacti,  
outerplanar graphs and circular arc graphs.  
\end{abstract}

\section{Introduction}

\begin{definition}
An alignment is a pair $(X,\mathcal{L})$, where $X$ is a 
finite set and $\mathcal{L}$ is a collection of subsets of $X$ 
satisfying 
\begin{enumerate}
\item $\es \in \mathcal{L}$ and $X \in \mathcal{L}$, and 
\item $\mathcal{L}$ is closed under intersections. 
\end{enumerate}
\end{definition}
The elements of the set $\mathcal{L}$ of an alignment are called `closed.'

\begin{definition}
Let $(X,\mathcal{L})$ be an alignment and let $A \subseteq X$. 
The hull of $A$, denoted $\sigma(A)$, is defined as 
\[\sigma(A)=\bigcap_{K \in \mathcal{L} \;\text{\rm and}\; A \subseteq K} \; K.\] 
\end{definition}
The hull operator $\sigma$ is a closure operator, that is, it satisfies 
\begin{enumerate}[(i)]
\item $\sigma(\es)=\es$,
\item $S \subseteq \sigma(S)$ for each set $S$, 
\item For sets $S$ and $T$, if $S \subseteq T$ then $\sigma(S) \subseteq \sigma(T)$, and 
\item for each set $S$, $\sigma(\sigma(S))=\sigma(S)$. 
\end{enumerate}

\bigskip 

\begin{definition}
Let $G$ be a graph. A set $S \subseteq V(G)$ is $P_3$-closed if 
\[\forall_{x \in V(G)} \;\; x \notin S \quad\Rightarrow\quad |N(x) \cap S| < 2.\]
\end{definition}
For a graph $G$, the pair $(V(G),\mathcal{L})$, where 
$\mathcal{L}$ is the collection of $P_3$-closed sets in $G$, is an alignment. 

\bigskip 

Two players play the $P_3$-game on a graph by alternately selecting vertices. 
At the start of the game all vertices are unlabeled. During the game  
players label vertices. Prior to every move, the set of labeled 
vertices is $P_3$-closed. Let $L$ denote the set of labeled vertices. 
A move consists of labeling a, previously unlabeled, vertex $x$. 
In effect, the new set of labeled vertices 
becomes $\sigma(L+x)$.

\bigskip 

According to the Sprague-Grundy theory,  
when the $P_3$-game is played on a finite graph,    
there is a winning strategy for one of the two players.  
If there is a polynomial-time 
algorithm to decide whether there is a winning strategy 
for one of the two players, we call the game decidable (in polynomial time). 
For example, when the graph is a clique with 
at least two vertices, then the second player wins the game. That is so because 
the convex hull of any two vertices in a clique is $V(G)$. 

\bigskip 

Another example where the game is easy to decide is the case 
where the playground is a star. 

\begin{lemma}
Assume the graph is a star, $K_{1,t}$. Then player one has a 
winning strategy if and only if the number of leaves, $t$, is even. 
\end{lemma}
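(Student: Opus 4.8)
The plan is to first pin down the $P_3$-closed sets of the star, then reduce the game to a short count-down on the leaves, and finally run a backward induction over the few reachable positions. Write $c$ for the centre and $\ell_1,\dots,\ell_t$ for the leaves. Since $N(\ell_i)=\{c\}$ we have $|N(\ell_i)\cap S|\le 1$ for every set $S$, so leaves never force the closure; the only constraint comes from the centre, namely that if $c\notin S$ then $P_3$-closedness requires at most one leaf to lie in $S$. Hence the closed sets are exactly $\es$, the singletons $\{\ell_i\}$, and every set containing $c$. In particular, since the labelled set $L$ is closed before each move, whenever $c$ is still unlabelled at most one leaf is labelled.

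Next I would read off the effect of a move. Once $c\in L$ we have $\sigma(L+x)=L+x$ for every remaining leaf $x$, so after the centre is taken each move labels exactly one fresh leaf; from a position with $c$ labelled and $k$ leaves labelled there remain precisely $t-k$ moves, and under normal play the mover wins iff $t-k$ is odd. The centre enters $L$ in one of two ways: a player labels $c$ outright, or, starting from $L=\{\ell_i\}$, a player labels a second leaf $\ell_j$, whereupon $\sigma(\{\ell_i,\ell_j\})=\{c,\ell_i,\ell_j\}$ and the hull captures $c$ for free. This is the one move that changes $L$ by more than a single vertex, and it is precisely what flips the parity of the remaining count-down; I expect it to be the crux, because it makes the two options available from $L=\{\ell_i\}$ land in opposite parity classes.

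Finally I would run the induction on the reachable positions, parametrised by whether $c$ is labelled and by the number of labelled leaves. Everything with $c$ labelled is already decided by the parity rule above. From $L=\{\ell_i\}$ the two moves reach ``$c$ labelled with one leaf'' and ``$c$ labelled with two leaves,'' which differ in parity, so one of them is losing for its mover and $L=\{\ell_i\}$ is a winning position regardless of $t$. Consequently, opening with a leaf hands player two a winning position, so player one's only candidate for a winning first move is to seize $c$, reaching ``$c$ labelled with zero leaves''; by the parity rule this is losing for the mover exactly when $t$ is even. Assembling these observations, the empty starting position is a win for player one if and only if $t$ is even, which is the assertion of the lemma.
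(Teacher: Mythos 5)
Your proof is correct and rests on the same core argument as the paper's: once the centre enters the hull the game is a pure count-down on the unlabelled leaves, so the outcome is decided by the parity of leaves remaining when the centre is captured, and your backward induction merely makes explicit (via the classification of the $P_3$-closed sets and the analysis of the opening $L=\{\ell_i\}$) what the paper compresses into two one-line strategy descriptions. The only cosmetic caveat is the degenerate case $t=1$, where the position $\{\ell_i\}$ offers a single move rather than two of opposite parity, but that move still lands in a position losing for its mover, so your conclusion is unaffected.
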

\begin{proof}
Assume the graph is a star with an even number of leaves. 
A winning move for player one is to choose the center. 

\medskip 

\noindent 
Assume the playground is a star with an odd number of leaves. 
The winning strategy for player two is to choose a vertex 
which leaves the playground with an even number of unlabeled 
leaves. 
\qed\end{proof}

\bigskip 

In his acclaimed 
paper on monophonic alignments, Duchet defines a {\em graph-alignment\/}  
as a pair $(G,\mathcal{C})$, where $G$ is a connected graph and $\mathcal{C}$ is a 
collection of subsets of $V(G)$ such that $(V(G),\mathcal{C})$ is an alignment 
and 
the following additional property holds. 

\[\boxed{\text{Every member of $\mathcal{C}$ induces a connected subgraph of $G$.}}\]

\bigskip 

The legal moves in the \underline{connected $P_3$-game} are restricted such that 
the $P_3$-closed sets induce connected subgraphs. Prior to every move, 
the playground is a connected graph $G$, 
with a set $L$ of labeled vertices satisfying 
\begin{enumerate}[\rm (1)]
\item $L$ is $P_3$-closed, and 
\item $G[L]$, that is the subgraph induced by L, is connected. 
\end{enumerate}
A legal move in the connected $P_3$-game is the selection of an unlabeled 
vertex $x$ such that $\sigma(L+x)$ induces a connected graph. 
In other words, a move is legal if the selected vertex is at 
distance at most two from the set $L$. 

\section{The connected $P_3$-game on trees}

To analyze a game played on graphs, one makes use of the `game graph.' 
This is a directed 
graph $(P,\Gamma)$, constructed as follows. 
Each node in the game graph represents a playground, 
and there is an arc $(u,v) \in \Gamma$ 
if the playground corresponding with the node $v$ 
can be reached from the playground corresponding with $u$ in one move. 

The Grundy function $g:P \rightarrow \mathbb{N} \cup \{0\}$ 
is defined on $P$ as follows.  If $p \in P$ is a sink, that is, a node without 
outgoing arcs, the Grundy value is defined as $g(p)=0$. For any other node, 
say $q$, the Grundy value is defined as 
\[g(q)=\min\;\{\; n\;|\; n \in \mathbb{N} \cup \{0\} 
\quad \text{and}\quad \forall_p \; (q,p) \in \Gamma 
\quad\Rightarrow\quad g(p) \neq n\;\}.\]
Thus, $g(q)$ is the smallest nonnegative integer which is not attained 
by any node in $P$ that can be reached from $q$ in one move. 
We allude that the Grundy function is sometimes called the $\mex$-function, 
which stands for {\em minimal excluded value\/}. 

Let $s \in P$ be the initial playground, before any move has been made. 
The Grundy value $g(s)$ is the Grundy value of the game $P$. 

The following theorem is easy to check. 

\begin{theorem}
\label{thm Grundy value}
Let $s \in P$ be the node representing the initial playground. Then player one 
has a winning strategy if and only if $g(s) \neq 0$. 
\end{theorem}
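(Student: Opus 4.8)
The plan is to establish the standard Sprague--Grundy correspondence: a position is won by the player to move precisely when its Grundy value is nonzero. First I would pin down the one structural fact that makes everything well-founded, namely that the game graph $(P,\Gamma)$ is a finite directed \emph{acyclic} graph. This holds because every legal move labels a previously unlabeled vertex $x$, so the number of labeled vertices strictly increases with each move and is bounded by $|V(G)|$. Consequently every play terminates after at most $|V(G)|$ moves, $\Gamma$ contains no directed cycle, and the recursion defining the $\mex$-function $g$ has no circular dependencies, so $g$ is well-defined.

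With acyclicity in hand, I would argue by induction along $\Gamma$ (for instance, on the length of the longest directed path leaving a node) that for every $p \in P$, the player to move at $p$ has a winning strategy if and only if $g(p) \neq 0$. Call $p$ a \emph{winning} position if the player to move wins, and a \emph{losing} position otherwise. Under the normal-play convention, a player who cannot move loses, so every sink is a losing position, which matches $g(\text{sink})=0$; this is the base case.

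For the inductive step I would take a non-sink $q$ and split on the value of $g(q)$. If $g(q)\neq 0$, then since $g(q)$ is the minimal excluded value among the Grundy values of the successors, the value $0$ is necessarily attained, so some successor $p$ has $g(p)=0$; by the induction hypothesis $p$ is losing, so the player at $q$ moves to $p$ and wins, making $q$ winning. If instead $g(q)=0$, then $0$ is excluded, so every successor $p$ satisfies $g(p)\neq 0$ and is therefore winning by the induction hypothesis; every available move hands the opponent a winning position, so $q$ is losing. This closes the induction and proves the equivalence at every node.

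Finally I would specialize to the initial node $s$. By definition player one is the player to move at $s$, so the equivalence just established says that player one has a winning strategy exactly when $g(s)\neq 0$, which is the claim. The only step demanding genuine care is the first paragraph—verifying finiteness and acyclicity of the game graph—because both the recursive definition of $g$ and the inductive argument rest entirely on it; once that is secured, the two-case $\mex$ argument is routine.
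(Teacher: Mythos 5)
Your proof is correct and follows essentially the same approach as the paper: the paper's entire proof is the one-line observation that from any node $q$ with $g(q) \neq 0$ a player can move to a node $p$ with $g(p) = 0$, which is precisely the key step of your inductive argument. You have simply filled in the details the paper leaves implicit (finiteness and acyclicity of the game graph, the induction, and the complementary case $g(q)=0$), all of which are handled correctly.
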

\begin{proof}
Notice that, from any node $q$ with $g(q) \neq 0$, a player can move 
to a node $p \in P$ with $g(p)=0$. 
\qed\end{proof}

\bigskip 

Consider a finite collection of games $G_1,\dots, G_t$. The product 
game 
\[G=G_1 \times \dots \times G_t\] 
is the game where a player is allowed to make 
a move in {\em one\/} of the games $G_i$.  

Let $g_i$ denote the Grundy value for game $G_i$. 
The nim-sum $g_1 \oplus \dots \oplus g_t$ is obtained as follows. 
Write each Grundy value $g_i$ in binary and add them up, 
{\em without a carry\/}. Thus, for example, $3\oplus 6=5$. 
Grundy's theorem is the following. 

\begin{theorem}[Grundy's Theorem]
\label{thm Grundy}
The Grundy value $g$ for the product game $G$ is 
\[g=g_1 \oplus \dots \oplus g_t.\] 
\end{theorem}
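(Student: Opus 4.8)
The plan is to reduce to the two-factor case and then induct on $t$, using the fact that both the product-game construction and the nim-sum $\oplus$ are associative: once the identity $g(G_1 \times G_2) = g_1 \oplus g_2$ has been established, the general statement follows by bracketing $G = (\cdots((G_1 \times G_2) \times G_3)\cdots \times G_t)$ and applying the two-factor case $t-1$ times. So I would concentrate all the real work on proving $g(G_1 \times G_2) = g_1 \oplus g_2$.

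For the two-factor case I would argue by induction on the size of the game graph of $G_1 \times G_2$ (equivalently, on the length of the longest play). The base case is a sink, where both factors are terminal; then $g_1 = g_2 = 0$ and the claim reads $0 = 0 \oplus 0$. For the inductive step, observe that every legal move in $G_1 \times G_2$ is a move in exactly one of the two factors, so each successor is again a product of two games whose game graph is strictly smaller; by the induction hypothesis its Grundy value is the nim-sum of its two component Grundy values. Writing $m = g_1 \oplus g_2$, I must then verify that $m$ is the $\mex$ of the set of successor values. By the definition of the $\mex$-function this splits into two obligations: (B) no successor has Grundy value exactly $m$, and (A) every $k$ with $0 \le k < m$ occurs as the Grundy value of some successor.

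Obligation (B) is immediate. A successor obtained by moving in $G_1$ has value $a' \oplus g_2$, where $a'$ is one of the successor values of $G_1$; were this equal to $m = g_1 \oplus g_2$, cancelling $g_2$ would give $a' = g_1$, contradicting $g_1 = \mex$ of the successor values of $G_1$, which means $g_1$ is not among them. The symmetric argument rules out moves in $G_2$.

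Obligation (A) is the crux, and the step I expect to be the main obstacle, since it is precisely where the binary, carry-free structure of $\oplus$ enters. Given $k < m$, I would locate the most significant bit position $p$ in which $k$ and $m$ differ; since $k < m$, bit $p$ of $m$ is $1$ and bit $p$ of $k$ is $0$. As $m = g_1 \oplus g_2$ has a $1$ in position $p$, exactly one of $g_1, g_2$ carries a $1$ there; say it is $g_1$. Set $a^\ast = k \oplus g_2$. A short bit comparison shows $a^\ast < g_1$: indeed $a^\ast \oplus g_1 = k \oplus m$ has its highest set bit at $p$, so cancelling the $1$ of $g_1$ at position $p$ forces $a^\ast$ to agree with $g_1$ above $p$ and to fall below it at $p$. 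Since $a^\ast < g_1 = \mex$ of the successor values of $G_1$, the value $a^\ast$ is attained by some successor $G_1'$ of $G_1$; moving to $(G_1', G_2)$ then yields Grundy value $a^\ast \oplus g_2 = k$, as required. Combining (A) and (B) gives that the $\mex$ of the successor values is exactly $m$, completing the induction and hence the theorem.
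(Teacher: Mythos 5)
Your proof is correct, but there is nothing in the paper to compare it against: the paper states Grundy's Theorem as a known classical result (it is the Sprague--Grundy theorem, going back to the Grundy 1939 reference in the bibliography) and uses it as a black box, offering no proof. What you have written is the standard textbook argument, and it is sound. The reduction to the two-factor case is legitimate because both the product construction and $\oplus$ are associative, and a move in a product is by definition a move in exactly one factor. Obligation (B) is the usual cancellation argument ($a' \oplus g_2 = g_1 \oplus g_2$ forces $a' = g_1$, impossible since a $\mex$ value is never attained by a successor), and obligation (A) correctly isolates the highest bit in which $k$ and $m = g_1 \oplus g_2$ differ, moves in the factor carrying a $1$ in that position, and uses the fact that every value strictly below a $\mex$ \emph{is} attained by some successor. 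The only thing you should state explicitly is the standing hypothesis that all game graphs are finite and acyclic, so that the Grundy function is well defined and your induction on the size of the product's game graph (equivalently, on the length of the longest play) is well-founded; this is implicit in the paper's Sprague--Grundy setting but is what makes both your base case and your inductive step meaningful.
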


\bigskip 

\begin{theorem}
\label{thm connected P_3 on trees}
There exists a polynomial-time algorithm that decides if the first 
player has a winning strategy in the connected $P_3$-game played 
on a tree. 
\end{theorem}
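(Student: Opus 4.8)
The plan is to set up a Sprague--Grundy analysis that exploits the tree structure to decompose every position into a disjoint sum of independent subgames, and then to compute the relevant Grundy values by a bottom-up dynamic program on the tree.

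First I would record two structural facts. Since the host graph $T$ is a tree and every intermediate labeled set $L$ is a connected subtree, the forest $T-L$ splits into components, each attached to $L$ through a single vertex; call the neighbor of $L$ in such a component its root. A legal move selects an unlabeled vertex at distance at most two from $L$, so within a fixed component with root $v$ the only candidates are $v$ itself (distance one) and the children of $v$ (distance two). Crucially, a move inside one component neither changes the other components nor triggers a cascade that leaves the component: labeling $v$ adds only $v$ (in a tree no other vertex thereby acquires two labeled neighbors), and labeling a child $c$ of $v$ forces exactly $v$ into the hull as the unique middle vertex, so $\sigma(L+c)=L\cup\{v,c\}$. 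Hence each position is precisely the product game of its components, and by Grundy's Theorem (Theorem~\ref{thm Grundy}) its Grundy value is the nim-sum of the components' values. I would also note that the game can never stall before $L=V$, since on a connected graph any $L\neq V$ has a distance-one neighbor; thus every sink is the fully labeled position, and by Theorem~\ref{thm Grundy value} player one wins iff the initial Grundy value is nonzero.

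Next I would define, for each vertex $v$ of the rooted tree, the value $\gamma(v)$ as the Grundy value of the subgame on the subtree rooted at $v$ under the assumption that the parent of $v$ already lies in $L$. Reading off the two move types above gives the recurrence
\[
\gamma(v)=\mex\Big(\big\{\textstyle\bigoplus_{c}\gamma(c)\big\}\cup\big\{\big(\textstyle\bigoplus_{c'\neq c}\gamma(c')\big)\oplus\big(\textstyle\bigoplus_{d}\gamma(d)\big)\;:\;c \text{ a child of } v\big\}\Big),
\]
where $c,c'$ range over the children of $v$ and $d$ over the children of $c$; the first entry corresponds to labeling $v$, and the entry indexed by $c$ corresponds to labeling $c$ (which drags in $v$). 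A leaf has the single option of being labeled, so $\gamma(\text{leaf})=\mex\{0\}=1$. The mex is taken over at most $1+\deg(v)$ values, whence $\gamma(v)\le 1+\deg(v)\le n$; all Grundy values and nim-sums therefore stay of polynomial size.

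Finally I would handle the opening move, which is special because $L=\es$ forces the first selection to label a single vertex with no distance-two option. From the initial position the options are exactly the choices of a root $r$, leading to the position of Grundy value $G(r)=\bigoplus_{c}\gamma_r(c)$ over the children $c$ of $r$ in the tree rooted at $r$. The game's Grundy value is $\mex\{G(r):r\in V\}$, so player one wins iff some root $r$ satisfies $G(r)=0$. Algorithmically, for each of the $n$ candidate roots I compute all $\gamma$-values in one bottom-up pass, yielding an $O(n^2)$ procedure (up to the cost of the nim-sums); a re-rooting dynamic program would shave the outer factor of $n$, but polynomial time already suffices. The step I expect to be the main obstacle is the structural claim that every position is genuinely the disjoint sum of its component subgames: one must argue carefully that the distance-two cascading moves never reach across an attachment vertex, so that the components evolve independently and Grundy's Theorem applies. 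Once that is nailed down, the recurrence and the complexity bound are routine.
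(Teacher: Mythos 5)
Your proposal is correct and follows essentially the same route as the paper: decompose each position into the independent subgames on the components (the paper's ``branches'') hanging off the connected labeled subtree, combine them by nim-sum via Grundy's Theorem, and run a bottom-up dynamic program for each choice of first move. Your write-up merely makes explicit what the paper leaves implicit, namely the $\mex$ recurrence for the rooted subgame values and the verification that the closure of a move never crosses an attachment vertex.
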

\begin{proof}
Let $T$ be a tree. 
Consider a playground $p$, which is characterized by a 
connected subtree of $T$. 
A branch is a subtree with a maximal number of edges, which contains 
one node of the subtree $p$ as a leaf. We call the leaf 
the root of the branch. Let $B_1,\dots,B_{\ell}$ be the branches 
that have at least one edge. Denote the 
Grundy value of $B_i$ as $g_i$. The Grundy value of $p$ 
is the Grundy value of $B_1 \times \dots \times B_{\ell}$.  
By Grundy's Theorem~\ref{thm Grundy}, 
this is the nim-sum of the Grundy values $g_i$, 
that is, 
\begin{equation}
\label{equation Grundy}
g(p)=g_1 \oplus \dots \oplus g_{\ell}.
\end{equation}

\medskip 

\noindent 
For each possible initial move $s$ of the first player, the algorithm 
computes the Grundy value $g(s)$, via Equation~\eqref{equation Grundy}, 
by dynamic programming on the branches. 

\medskip 

\noindent 
By Theorem~\ref{thm Grundy value}, this proves the theorem. 
\qed\end{proof}

\section{The $P_3$-game on paths and cycles}

\subsubsection{Paths}

\begin{theorem}
\label{thm P_3 game on paths}
There exists an $O(n^2)$ algorithm that decides the $P_3$-game 
on paths. 
\end{theorem}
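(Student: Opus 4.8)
The plan is to reduce the game on a path to a disjunctive sum of independent games on the maximal unlabeled intervals, and then to compute the relevant Grundy values by dynamic programming. First I would pin down the closure operator on $P_n$: a vertex is forced into $S$ precisely when both of its neighbours already lie in $S$, so $\sigma$ fills exactly the interior gaps of length one, and a $P_3$-closed set is one in which every maximal unlabeled interval flanked by two labeled vertices has length at least two. Consequently, at any stage the unlabeled part of the path is a union of maximal unlabeled intervals (``segments''), and labeling a vertex together with its closure affects only the single segment containing that vertex: the closure cascade cannot cross a labeled vertex, since a vertex lying outside a segment keeps both of its neighbours outside that segment and so never changes status. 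This is the crucial observation, because it exhibits each position as a disjunctive sum of segment games, whence by Grundy's Theorem~\ref{thm Grundy} its Grundy value is the nim-sum of the segment values.

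Next I would classify segments by the nature of their two boundaries, each being either a path endpoint (a ``wall'') or a labeled vertex, giving three kinds: two walls (occurring only as the initial whole path), one wall and one label (a ``boundary'' segment), and two labels (an ``interior gap''). Writing $A(\ell)$ and $B(\ell)$ for the Grundy values of an interior gap and a boundary segment of length $\ell$, I would analyse, for each choice of vertex $j$ inside a segment, exactly which further vertices the closure pulls in. The point is that only the neighbour of a \emph{labeled} boundary is ever forced in, never the neighbour of a wall; labeling $j$ therefore splits the segment into a well-determined left piece and right piece, whose boundary types are again of the three kinds above. This yields recurrences of the form $A(\ell)=\mex_j\{A(a)\oplus A(b)\}$ and $B(\ell)=\mex_j\{B(a)\oplus A(b)\}$, where $j$ fixes the lengths $a,b$ of the two pieces, and where a would-be interior gap of length one simply disappears because it is immediately filled. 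The initial position then has Grundy value $C(n)=\mex_{1\le j\le n}\{B(j-1)\oplus B(n-j)\}$, the first move splitting the path into two boundary segments with no closure.

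Finally I would compute $A(\ell)$ and $B(\ell)$ for all $\ell\le n$ bottom-up; each value costs $O(\ell)$ work to form the reachable set of nim-sums and take its $\mex$, so filling both tables costs $O(n^2)$, after which $C(n)$ is obtained in $O(n)$ more. By Theorem~\ref{thm Grundy value} the first player has a winning strategy if and only if $C(n)\ne 0$, which establishes the claim.

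The main obstacle I expect is the bookkeeping in the second step: correctly tracking every closure cascade at the two ends of a segment and handling the degenerate cases, in particular that an interior gap of length one vanishes (its single vertex is forced) while a boundary segment of length one is a legitimate, non-vanishing position. Getting the case split for $j$ near a wall versus near a labeled boundary exactly right is where the argument is delicate; once the recurrences are verified, the dynamic program and the $O(n^2)$ bound follow routinely. I also anticipate that the sequences $A(\ell)$ and $B(\ell)$ are eventually periodic, which would permit a closed form and a faster algorithm, but this refinement is not needed to meet the stated bound.
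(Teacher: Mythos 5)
Your proposal is correct and follows essentially the same route as the paper: the paper's dynamic program over subpaths, each occurring in four variants according to whether either end is bordered by a labeled vertex, is exactly your classification into whole-path, boundary, and interior segments (the paper counts left- and right-bordered separately rather than merging them by symmetry), with each move splitting a segment into at most two smaller ones combined by nim-sum via Grundy's Theorem~\ref{thm Grundy}. Your treatment is more detailed --- in particular the explicit handling of the closure filling length-one interior gaps --- but the decomposition, recurrences, and $O(n^2)$ bound coincide with the paper's argument.
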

\begin{proof}
Consider the set of paths with $1$ up to $n$ vertices. 
Create a set in which 
each path occurs 4 times as it is bordered by a 
labeled vertex on either side or not.  So, the set contains 
$4n$ elements. For each element, the Grundy value is computed 
by considering all possible moves. Each move divides the path 
into at most two, strictly smaller elements. The algorithm 
processes the paths in the set in order of increasing length.
\qed\end{proof}

\begin{remark}
As for now, we don't have an easy formula which tells whether 
$P_n$ is won for player one. 
\end{remark}

\subsubsection{Cycles}

\begin{theorem}
There exists a polynomial-time algorithm to decide 
th $P_3$-game on cycles. 
\end{theorem}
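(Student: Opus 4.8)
The plan is to reduce the cycle game to the path game of Theorem~\ref{thm P_3 game on paths} by combining the vertex-transitivity of $C_n$ with a careful analysis of the $P_3$-hull on a cycle. First I would describe a playground on $C_n$ by its unlabeled \emph{gaps}, the maximal arcs of consecutive unlabeled vertices. Since every vertex of a cycle has exactly two neighbors, a set $L$ fails to be $P_3$-closed precisely when some gap has length $1$: that lone unlabeled vertex sees two labeled neighbors and is swallowed by $\sigma$. Hence in every playground each gap has length at least $2$, and labeling a vertex in a gap of length $g$ has a transparent effect. Labeling an endpoint of the gap shortens it to $g-1$, collapsing it entirely when $g=2$, since the one surviving vertex then has two labeled neighbors; labeling an interior vertex splits the gap into two shorter gaps, each of which loses its new endpoint to $\sigma$ exactly when that endpoint would otherwise form a gap of length $1$. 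The key structural point is that the hull never propagates across a labeled vertex, so distinct gaps evolve independently and a playground is, as a game, the product of its gaps in the sense of Grundy's Theorem~\ref{thm Grundy}.

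The second step is the reduction itself. Because $C_n$ is vertex-transitive, all $n$ opening moves of the first player lead to isomorphic playgrounds, namely a single labeled vertex $x$ together with one gap consisting of the remaining $n-1$ vertices, arranged as a path both of whose ends are adjacent to $x$. Since play inside a gap is local and never crosses $x$, this playground is isomorphic, as a game, to the path-game position ``$P_{n-1}$ bordered by a labeled vertex on each side,'' which is exactly one of the $4n$ elements whose Grundy value is computed by the $O(n^2)$ algorithm of Theorem~\ref{thm P_3 game on paths}. Write $h$ for this Grundy value. All subsequent playgrounds are products of such bordered path-gaps, so no position outside the scope of the path algorithm is ever reached.

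Finally I would assemble the answer for the initial cycle. The opening playground has outgoing arcs to a single isomorphism type, so the set of Grundy values reachable in one move is the singleton $\{h\}$, whence
\[g(C_n)=\mex\{h\}=\begin{cases}1 & \text{if } h=0,\\ 0 & \text{if } h\neq 0.\end{cases}\]
By Theorem~\ref{thm Grundy value}, the first player therefore has a winning strategy if and only if $h=0$, a condition testable in $O(n^2)$ time once the path Grundy values have been tabulated, with small cases such as $C_3$ checked directly. The step I expect to demand the most care is the faithfulness of the reduction in the third structural claim above: one must verify precisely how $\sigma$ acts at the two ends of a gap---in particular the collapse of length-$2$ gaps and the absorption of freshly created length-$1$ gaps---so that the in-gap game genuinely coincides with the bordered path game rather than merely resembling it.
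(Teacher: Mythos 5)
Your proposal is correct, and its central device --- observing that after the opening move the rest of the game is the path $P_{n-1}$ bordered by a labeled vertex at each end, so that the $O(n^2)$ algorithm of Theorem~\ref{thm P_3 game on paths} finishes the job --- is precisely the reduction the paper uses; the paper phrases it as ``splitting'' the first chosen vertex into two labeled endpoints of a path. The difference is in scope. The paper invokes this reduction only for odd cycles other than $K_3$, and settles even cycles by a separate symmetry argument: player two answers with the antipodal vertex, which creates two identical gaps and hence a position of nim-value $0$. You instead run the reduction uniformly for every $n \geq 3$: vertex-transitivity makes all opening moves equivalent, so $g(C_n) = \mex\{h\}$ where $h$ is the Grundy value of the doubly-bordered $P_{n-1}$, and by Theorem~\ref{thm Grundy value} player one wins if and only if $h = 0$. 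Your route buys a case-free argument --- no parity split, no $K_3$ exception, and no reliance on a mirroring strategy stated without proof --- at the cost of your preparatory gap analysis, whose crucial point (that identifying the two border vertices of the split path into the single cycle vertex $x$ changes nothing) is indeed routine, since the $P_3$-closure rule consults only the labeled/unlabeled status of a vertex's neighbors, never their identity. Both arguments are sound; yours is arguably tidier, while the paper's even-case strategy gives a direct, computation-free answer for half the values of $n$.
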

\begin{proof}
When the cycle has an even number of vertices, the winning strategy for 
player two is to choose the vertex opposite player one's choice. 

\medskip 

\noindent 
The odd case is, unless it is $K_3$, less trivial. We can reduce it to paths and 
use Theorem~\ref{thm P_3 game on paths} as follows. Consider a 
cycle with an odd number $n > 3$ vertices. The first player selects 
some arbitrary vertex.  Split the selected vertex in two vertices, creating 
a path with two selected vertices at the ends, and $n-1$ unselected 
vertices between them. According to Theorem~\ref{thm P_3 game on paths}, there is an 
$O(n^2)$ algorithm to decide this game. 
\qed\end{proof}

\section{The connected $P_3$-game on paths and cycles}

\subsubsection{Cycles}

\begin{theorem}
The first player has a winning strategy when the game is played on a 
cycle $C_n$ if and only if $n \equiv 2 \pmod 3$. 
\end{theorem}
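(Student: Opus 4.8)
The plan is to reduce the connected $P_3$-game on $C_n$ to a one-pile subtraction game and then determine its P- and N-positions. First I would observe that throughout play the set of labeled vertices is always a \emph{contiguous arc} of the cycle: after player one's opening move it is a single vertex, and I would verify by induction that every legal move enlarges this arc into a larger arc. Indeed, once $L$ is an arc with ends $a$ and $b$, the only vertices at distance at most two from $L$ are $b+1,b+2,a-1,a-2$ (indices mod $n$), and labeling any of them, followed by the $P_3$-closure (on a cycle a vertex is absorbed exactly when both its neighbours are labeled), yields again an arc. Hence a position is completely described by the number $m$ of currently unlabeled vertices, which form a single path bounded on both sides by labeled vertices; the opening move leaves $m=n-1$ with player two to move. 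Only win/loss matters, so P/N-positions suffice and the Grundy machinery is not needed here.

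Next I would catalogue the moves in terms of $m$. For $m\ge 4$ the two possible absorptions cannot reach the far boundary, so labeling $u_1$ (or $u_m$) deletes one vertex and labeling $u_2$ (or $u_{m-1}$) drags in $u_1$ and deletes two; thus the reachable states are exactly $m-1$ and $m-2$, i.e.\ the standard subtraction game with subtraction set $\{1,2\}$. The delicate part is the endgame, where the closure \emph{cascades} across the shrinking gap: when $m=2$ every move absorbs the entire remaining gap and reaches $m=0$, and when $m=3$ labeling $u_2$ absorbs both $u_1$ and $u_3$ and again reaches $m=0$ (while labeling $u_1$ reaches $m=2$). I expect these boundary cascades to be the main obstacle, since they perturb the clean period-$3$ pattern of the naive $\{1,2\}$-subtraction game.

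With the move table in hand I would prove, by induction on $m$, that the P-positions (from which the player to move loses) are exactly $m=0$ together with all $m\ge 4$ with $m\equiv 1\pmod 3$, while $m=1,2,3$ are all N-positions. The inductive step is routine once the reachable sets $\{m-1,m-2\}$ (for $m\ge 4$) and the small cases are pinned down: $m$ is a P-position iff none of its successors is one, and the values $4,7,10,\dots$ are precisely those all of whose successors already lie in the N-set.

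Finally I would translate back. Player one wins iff, after the forced opening, player two faces a P-position, i.e.\ iff $m=n-1$ is a P-position. Since $n\ge 3$, this holds iff $n-1\ge 4$ and $n-1\equiv 1\pmod 3$, equivalently $n\equiv 2\pmod 3$; the small values $n=3,4$ give $m=2,3$, which are N-positions, so player one loses there, consistently with $3,4\not\equiv 2\pmod 3$. This establishes the claimed equivalence.
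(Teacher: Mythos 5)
Your proof is correct and follows essentially the same route as the paper: both reduce every position to the length of the labeled arc (equivalently, the unlabeled gap), identify the moves as subtract-one/subtract-two with cascading exceptions at gap sizes $2$ and $3$, and finish by an induction modulo $3$ translated back through the forced opening move. The only difference is that you track P/N-positions where the paper computes full Grundy values via the $\mex$ recursion, an immaterial distinction here since no sums of games arise on a cycle.
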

\begin{proof}
We show how to calculate the Grundy value. 
Consider a cycle with $n$ vertices. Let $f(k)$ be the Grundy value 
when the playground is a path in $C_n$ with $k$ vertices. 
Then $f(n)=0$. 

\medskip 

\noindent
There cannot be a playground with one vertex not selected, 
thus we need to leave $f(n-1)$ undefined. 
Next, we have $f(n-2)=1$ since the game ends in the next move when two 
vertices are not selected. 
Subsequently, we find 
\begin{align*}
f(n-3) &= \mex \; \{\;f(n-2),\;f(n)\;\}=2, \\ 
f(n-4) & =\mex \; \{\;f(n-3),\;f(n-2)\;\}=0,  \\
f(n-5) & = \mex \;\{\;f(n-4),\;f(n-3)\;\}=1, \quad\text{\&tc.}
\end{align*}

\medskip 

\noindent
It is now easy to prove, by induction, that, for $i \geq 2$,  
\[f(n-i)=
\begin{cases}
0 & \text{if $i \equiv 1 \pmod 3$}\\
1 & \text{if $i \equiv 2 \pmod 3$}\\
2 & \text{if $i \equiv 0 \pmod 3$.}
\end{cases}\] 
This implies that 
\[ f(1)= 
\begin{cases}
0 & \text{if $n \equiv 2 \pmod 3$}\\
1 & \text{if $n \equiv 0 \pmod 3$}\\
2 & \text{if $n \equiv 1 \pmod 3$.}
\end{cases}\] 

\medskip 

\noindent 
Let $g(n)$ be the Grundy value for the cycle with $n$ vertices. 
Then 
\[g(n) = \mex \;\{\;f(1)\;\}= 
\begin{cases}
1 & \text{if $n \equiv 2 \pmod 3$}\\
0 & \text{otherwise.}
\end{cases}\] 
This proves the theorem. 
\qed\end{proof}

\subsubsection{Paths}

We denote a path with $n$ vertices as $P_n$. 

\begin{theorem}
The first player has a winning strategy in the connected 
$P_3$-game played on a path with $n$ vertices if $n \neq 2$. 
\end{theorem}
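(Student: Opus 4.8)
The plan is to reduce the game on a path to a disjunctive sum of two classical subtraction games and then locate a winning opening move with the Sprague--Grundy machinery already set up. First I would record what the $P_3$-closure does on a path $v_1\cdots v_n$: the only way a vertex is dragged into the closure is when both of its path-neighbours are already labeled, so, starting from a contiguous block of labeled vertices, the labeled set stays a contiguous interval throughout. After player one's free opening move at a vertex $v_i$ the playground is the single-vertex interval $[i,i]$, leaving $i-1$ unlabeled vertices to its left and $n-i$ to its right. Extending the interval leftwards means selecting either the adjacent vertex (distance one) or the vertex two steps away (distance two, which pulls in the intermediate vertex by closure), and symmetrically on the right, while any vertex at distance three or more is illegal. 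Thus each side is, independently, the subtraction game with subtraction set $\{1,2\}$ on a pile whose size is the gap, the two sides interacting only through the shared interval, which merely grows. By Grundy's Theorem~\ref{thm Grundy} the Grundy value of the position reached by the opening move $v_i$ is the nim-sum of the two one-sided values.

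Next I would compute the one-sided value. A short induction, identical in spirit to the $f(\cdot)$ computation for cycles above, shows that the $\{1,2\}$-subtraction game on a gap of size $m$ has Grundy value $m \bmod 3$. Writing $g(p_i)$ for the value of the position reached by opening at $v_i$, this gives
\[
g(p_i)=\bigl((i-1)\bmod 3\bigr)\;\oplus\;\bigl((n-i)\bmod 3\bigr).
\]

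Finally I would apply Theorem~\ref{thm Grundy value}: player one wins exactly when the initial position has nonzero value, i.e.\ when $g(s)=\mex\{\,g(p_i):1\le i\le n\,\}\neq 0$, which happens precisely when some opening attains $g(p_i)=0$. As the two summands lie in $\{0,1,2\}$, this means $(i-1)\equiv(n-i)\pmod 3$, equivalently $2i\equiv n+1\pmod 3$, for some $1\le i\le n$. I would then finish with a three-line case analysis on $n \bmod 3$, exhibiting an explicit winning opening: an endpoint when $n\equiv 1$ (this also covers $n=1$), the vertex $v_2$ when $n\equiv 0$, and $v_3$ when $n\equiv 2$ (then $n\ge 5$, since $n=2$ is excluded). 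Each choice leaves two gaps of equal residue, hence nim-sum zero, after which player one preserves the zero nim-sum by the standard Sprague--Grundy response, always moving back to equal residues. The single value of $n$ for which no admissible $i$ solves the congruence is exactly $n=2$, which is why it is the sole exception. The main thing to get right is this last existence step, together with the verification that the closure rule genuinely yields the $\{1,2\}$-subtraction move set on each side; everything else is routine.
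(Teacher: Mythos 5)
Your proposal is correct and takes essentially the same route as the paper: the paper's one-sided value $f$ (a path with one endpoint labeled, computed via $f(n)=\mex\{f(n-1),f(n-2)\}$) is exactly your $\{1,2\}$-subtraction game with Grundy value $m \bmod 3$ on the gap, and both arguments split the position after the opening move into the nim-sum of two such one-sided games and finish with a case analysis modulo $3$. The only cosmetic difference is that the paper computes the exact values $g(n)$ of the initial position, while you merely exhibit, for each residue class of $n$, an opening move of value $0$ --- which is all the theorem needs.
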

\begin{proof}
Let $f(n)$ denote the Grundy value for the path with one of the 
endpoints labeled. We claim that 
\[f(n)=
\begin{cases}
0 & \text{if $n \equiv 1 \pmod 3$}\\
1 & \text{if $n \equiv 2 \pmod 3$}\\
2 & \text{if $n \equiv 0 \pmod 3$.}
\end{cases}
\]
This claim is readily checked via the recurrence
\[f(n)=\mex \;\{\;f(n-1),\;f(n-2)\;\}.\] 

\medskip 

\noindent
Let $g(n)$ be the Grundy value for the path $p(n)$. 
Then we have that 
\[g(n)=\mex \;\{\; f(n), f(n-i)\oplus f(i+1) \;|\; 1 \leq i < n-1\;\}.\]
It follows that, for $n \neq 2$, 
\[g(n)=
\begin{cases}
\mex \;\{\;0,\;3\;\}=1 & \text{if $n \equiv 1 \pmod 3$}\\
\mex \;\{\;0,\;1\;\}=2 & \text{if $n \equiv 2 \pmod 3$}\\
\mex \;\{\;0,\;2\;\}=1 & \text{if $n \equiv 0 \pmod 3$.}
\end{cases}
\]
This proves the theorem. 
\qed\end{proof}

\section{The $P_3$-game on cographs}

A $P_4$ denotes a path with four vertices. 

\begin{definition}
A cograph is a graph without induced $P_4$. 
\end{definition}

One characterization of cographs is that, every induced 
subgraph with at least two vertices is either a join or a 
union of two smaller cographs. 

\begin{theorem}
There exists an algorithm to decide the $P_3$-game in polynomial time  
on cographs. 
\end{theorem}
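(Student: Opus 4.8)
The plan is to exploit the recursive structure of cographs — every cograph with at least two vertices is either a disjoint union $G_1 \cup G_2$ or a join $G_1 + G_2$ of smaller cographs — together with a careful analysis of what a playground on a cograph can look like. The key observation is that the $P_3$-convexity collapses very quickly on dense graphs: if at any point the set $L$ of labeled vertices contains two vertices with a common neighbour $x$, then $x$ gets absorbed, and on a join this cascades. So I first record a structural lemma: on a join $G = G_1 + G_2$, as soon as a player selects a vertex in the ``opposite'' part from an already-labeled vertex (equivalently, once both parts have been touched), the hull swallows the entire vertex set, ending the game. Consequently, on a join, the only nontrivial play happens while all labeled vertices lie inside a single part $G_i$; the game on $G_1 + G_2$ is then equivalent to: pick one part, play the (induced-subgraph) game there, and the first move into the other part is a terminal move.

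The second step is to set up the right game-states to make dynamic programming go through. For a cograph $H$ appearing in the cotree decomposition, I would tabulate the Grundy value not of ``$H$'' alone but of the position ``the playground is $H$ with a specified small amount of boundary information'' — concretely, whether $H$ currently sits inside a larger join whose other side has already been labeled (in which case any move ends the game) or not. Because the $P_3$-game as defined here is \emph{not} the connected variant, a single move may split the remaining unlabeled vertices into several independent pieces; on a union $G_1 \cup G_2$ the two sides are already independent from the start, so by Grundy's Theorem~\ref{thm Grundy} the Grundy value of the union is the nim-sum of the Grundy values of the parts (with the same boundary tag passed down). On a join, by the structural lemma, a position is essentially ``we are playing in exactly one of the two parts,'' so its Grundy value is the $\mex$ over (a) the terminal value $0$ reachable by jumping to the other part and (b) all values reachable by a legal move staying inside the chosen part. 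This yields a recurrence over the cotree whose table size is polynomial.

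The third step is the bookkeeping: process the cotree bottom-up, and at each node combine the children's tables via nim-sum (union nodes) or via the $\mex$-of-reachable-values rule (join nodes), reading off $g(s)$ at the root and applying Theorem~\ref{thm Grundy value}. One must check that the Grundy values that arise stay bounded by a polynomial (in fact one expects them to stay bounded by a constant, or at worst $O(\log n)$, since each move either ends the game or strictly shrinks a single part), so that the $\mex$ computations are cheap; I would prove a crude bound $g \le n$ which suffices for polynomiality.

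The main obstacle I anticipate is \emph{not} the join case — that one is essentially trivial once the collapse lemma is in hand — but correctly handling the residual structure left behind by a non-collapsing move together with the interaction between the ``split into independent components'' phenomenon of the unconnected game and the join/union recursion. In particular, after a move inside a part $G_i = A \cup B$ or $A + B$ one must argue that the resulting playground is again a cograph position describable by the same table entries (possibly a product of several such), so that the recursion is genuinely closed; making the set of tracked states both rich enough to be self-referential and small enough to be polynomial is where the care is needed. I expect that distinguishing states only by ``which cotree node'' together with the single boolean ``has the complementary side of some ancestor join been hit yet'' is exactly sufficient, and verifying that claim is the crux of the argument.
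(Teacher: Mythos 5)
Your key structural lemma for the join case is false, and it is exactly the false part that drives the rest of your construction. You claim that on $G = G_1 + G_2$, as soon as both parts have been touched the hull swallows all of $V(G)$. Take $G = K_{1,2} = K_1 + 2K_1$ (or any star $K_{1,t} = K_1 + tK_1$): if player one picks a leaf $x$ and player two picks the centre $y$, then each remaining leaf has only \emph{one} labeled neighbour, namely $y$, so $\sigma(\{x,y\}) = \{x,y\}$ and the game continues. The collapse you describe only occurs once two labeled vertices lie in the \emph{same} part $G_i$ (then every vertex of $G_{3-i}$ has two labeled neighbours), and even then $G_i$ itself is only swallowed back when $|G_{3-i}| \geq 2$. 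The residual case --- one part is a single universal vertex --- is precisely where the game stays alive, and there the answer is a parity count of the components of the other part not yet touched; your ``first move into the other part is terminal'' rule would misdecide every star with an odd number of leaves, contradicting the star lemma at the start of the paper.

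The dual claim is also wrong: play confined to one part $G_1$ of a join is \emph{not} equivalent to the $P_3$-game on the induced subgraph $G[V(G_1)]$, because the closure is computed in $G$, not in $G_1$. As soon as two vertices of $G_1$ are labeled, all of $G_2$ is absorbed, and then (for $|G_2| \geq 2$) all of $G_1$ as well --- the game on the join ends after two moves in the same part, whereas the induced game on $G_1$ may run much longer. So your cotree table, which passes the Grundy value of the induced game on a part up to its join parent, computes values for the wrong game. The paper avoids all of this by not recursing on joins at all: it enumerates the $O(n^2)$ positions after the first two moves and observes that each is either already the whole vertex set, or (universal-vertex case) a position whose value is determined by the parity of the number of untouched components of the non-trivial side. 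Your union-node treatment via nim-sums is fine and matches the paper; the join-node recursion is where the argument breaks and where it would need to be replaced by the paper's two-move case analysis.
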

\begin{proof}
Let $G$ be a cograph. 
When $G$ is disconnected, the game reduces to the sum of 
the games played on the components of $G$. 
By the Sprague-Grundy theorem, the Grundy value of the game is the 
nim-sum of the Grundy values played on the components. 

\medskip 

\noindent 
Assume that $G$ is connected. Then $G$ is the join of two smaller 
cographs $G_1$ and $G_2$, that is, all vertices 
of $G_1$ are adjacent to all vertices of $G_2$.  
The algorithm considers all possible 
playgrounds after both players have made a move. 
First assume that both players chose a vertex of $G_1$. 
Then all vertices of $G_2$ are added to the hull after the second move. 
When $G_2$ has at least two vertices, then the game is over, since 
all vertices of $G_1$ are also in the $P_3$-closure. 

\medskip 

\noindent 
Assume that $G_2$ has only one vertex. Notice that this vertex is 
universal, that is, it is adjacent to all other vertices. 
Let $C_1, \dots, C_t$ be the components of $G_1$. The components that 
contain $x$ and $y$ are subsets of the hull. The components that do not 
contain $x$ nor $y$ are added one by one in every subsequent move. 
It follows that, player one wins the game from this position 
if and only if the number of 
components that do not contain $x$ or $y$ is odd. 
\medskip 

\noindent 
Assume that player one makes his first move in $G_1$ and that 
player two makes his first move in $G_2$. Then the next move 
adds a vertex of $G_1$ or $G_2$ and this reduces the analysis to 
one of the 
previous cases. 

\medskip 

\noindent 
This proves the theorem. 
\qed\end{proof}

\section{The connected $P_3$-game on ladders}

\begin{definition}
A ladder is the Cartesian product of two paths, one of which has 
only one edge. 
\end{definition}
A ladder is denoted as $L_n=P_2 \times P_n$ and it has $2n$ vertices 
and $2(n-1)+n$ edges. It consists of two paths $P_n$ of length $n-1$ and a 
perfect matching. The edges of the matching are called the rungs and the 
two paths $P_n$ are called the ringers, or rails or stiles.  

\begin{theorem}
The first player has a winning strategy for the connected $P_3$-game on a 
ladder $L_n$ if and only if $n \equiv 0 \pmod 6$. 
\end{theorem}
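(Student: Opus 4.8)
The plan is to compute Grundy values, exactly as in the preceding sections, using the game graph and Grundy's Theorem~\ref{thm Grundy}. The first task is to pin down which playgrounds can actually occur. I would prove the following structural dichotomy: every $P_3$-closed connected subset that arises in play on $L_n$ is either a \emph{full block}, i.e.\ a sub-ladder $P_2 \times P_m$ occupying a contiguous range of columns and flush on both rails, or a \emph{single-rail segment}, i.e.\ a contiguous path along one rail with the opposite rail empty. The key point is that a one-rail overhang is never $P_3$-closed: if one rail is filled up to column $b$ and the other only to column $b-1$, then the vertex in column $b$ on the short rail has two labeled neighbours and is forced into the hull, so both ends of any two-rail region are flush. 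One also checks that completing a rung turns a single-rail segment into a full block, and that from a full block every legal move yields a full block; hence single-rail segments occur only in the opening.

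For a full block I would then show the game splits. Labelling a vertex one or two columns beyond a flush end is legal (it is at distance at most two from $L$), and the hull cascades along the empty rail to refill it, extending the block by one or two columns while keeping it flush. Moves at the left end and moves at the right end are independent, so by Grundy's Theorem~\ref{thm Grundy} the Grundy value of a block is the $\oplus$ of two one-sided games. Each one-sided game is the subtraction game with subtraction set $\{1,2\}$ on the free columns of that side, whose Grundy value is the number of free columns taken $\bmod 3$. This yields a closed form $B(p,q)=(p \bmod 3)\oplus(q \bmod 3)$, where $p$ and $q$ count the free columns to the left and to the right.

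It remains to analyse the opening. For a single-rail segment I would set up the $\mex$-recurrence whose options are: extend the segment by one or two columns (remaining single-rail), or complete it into one of the three adjacent blocks by labelling a vertex on the empty rail. Writing $s(p,q)$ for its Grundy value (independent of the segment length, since every completion refills the same rail), this is a two-variable recurrence feeding on the $B$-values and on smaller $s$-values, which I would solve to obtain its eventual periodicity. Finally, the initial empty playground has value $\mex$ over all first moves, and each first move is a single-rail segment of length one; these lie on the single anti-diagonal $p+q=n-1$. Evaluating the $\mex$ of the $s$-values along that anti-diagonal, and invoking Theorem~\ref{thm Grundy value}, produces the stated congruence for $n$.

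The main obstacle is this last step: controlling the values along the anti-diagonal precisely enough to read off the period. The $\{1,2\}$-subtraction structure by itself produces period $3$, so what decides whether the true period is $3$ or $6$ is the interaction between the interior (two-sided) single-rail values and the parity of $n$ within the exceptional residue class. Getting this interplay right, together with careful treatment of the degenerate small positions (segments of length one, and blocks of width one or two, where some options coincide or vanish), is where the delicacy lies.
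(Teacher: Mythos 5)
Your framework is sound, and it is a genuinely different route from the paper's: the paper argues entirely by explicit strategies (a domino-pairing strategy for player one, a mirroring strategy for player two), with no Grundy computation at all, whereas you classify the reachable playgrounds and compute values. Your structural claims are all correct: every reachable playground is a full block or a single-rail segment, a full block splits into two independent $\{1,2\}$-subtraction games with value $B(p,q)=(p\bmod 3)\oplus(q\bmod 3)$, and the single-rail value $s(p,q)$ depends only on the free columns $(p,q)$ and satisfies the $\mex$-recurrence you describe. The genuine gap is precisely the step you defer: you never solve that recurrence, you only assert that the $\mex$ along the anti-diagonal $p+q=n-1$ ``produces the stated congruence.'' It does not. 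Carrying the computation out (a routine induction, since each row of the table is periodic with period $3$), one finds $s(p,0)=(p+1)\bmod 3$, so $s(p,0)=0$ exactly when $p\equiv 2\pmod 3$, while for $p\geq 1$ and $q\geq 1$ the values $s(p,q)$ cycle through $1,2,3$ and are never $0$. Hence a first move to a Grundy-$0$ position exists if and only if $n-1\equiv 2\pmod 3$, that is, if and only if $n\equiv 0\pmod 3$, the winning opening being a corner. The ``interaction with parity'' that you hope will stretch the period from $3$ to $6$ never materializes.

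In fact your method, completed honestly, disproves the statement rather than proving it: the theorem is false for $n\equiv 3\pmod 6$. For $L_3$ this can be checked by hand. Label the rails $a_1,a_2,a_3$ and $b_1,b_2,b_3$ with rungs $a_ib_i$. Player one plays the corner $a_1$; player two's only legal replies are $a_2$, $a_3$, $b_1$, $b_2$, and in each case player one's next move closes the entire ladder (against $a_2$ play $b_3$; against $a_3$ play $b_2$; against $b_1$ play $a_3$; against $b_2$ play $a_3$), leaving player two without a move. The paper's own proof breaks at exactly this point: its prescribed reply for player two to a corner opening --- choose a vertex in the other stile so that the remaining rungs can be partitioned into dominoes --- is impossible when $n\equiv 0\pmod 3$, since player two's reply leaves $n-1$ or $n-2$ free rungs, neither divisible by $3$; and the paper's sufficiency argument for player one (complete a domino after every reply) never uses that $n$ is even, so it works for all $n\equiv 0\pmod 3$. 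So do not try to force your table to match the claimed period $6$: the correct conclusion of your approach is that player one wins the connected $P_3$-game on $L_n$ if and only if $n\equiv 0\pmod 3$.
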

\begin{proof}
First assume that $n \equiv 0 \pmod 6$. Partition the ladder into dominoes, 
that is, $P_2 \times P_3$. To win the game, player one starts in a corner 
of the ladder, that is, a vertex of degree two. It is easy to check that, 
no matter what player two plays, player one always completes the 
first, and then every subsequent domino of the ladder. 

\medskip 

\noindent
Now assume that $n \not\equiv 0 \pmod 6$. Assume player one starts in a 
corner. Then player two chooses a vertex in the other stile, such that 
the remaining rungs can be partitioned into dominoes. Then, when player 
one enters a domino, player two can choose a vertex which encloses 
the domino in the $P_3$-closure. 

\medskip 

\noindent
Assume player one starts in some middle rung.  Player two chooses a 
vertex in the other stile, such that the two numbers of rungs below and above 
the $P_3$-closure are equivalent modulo 3. Each next move adds either 
one or two rungs to the $P_3$-closure. The strategy of player two 
is to keep the numbers of remaining rungs modulo three, below and above the 
$P_3$-closure the same. 
\qed\end{proof}

\section{The connected $P_3$-game on caterpillars}

A caterpillar is a tree that contains a dominating path, that 
is, a path such that every other vertex is connected to a vertex in the 
path.  Alternatively, a caterpillar is a tree without the 
subdivision of a star $K_{1,3}$ as a subgraph. 
The algorithm that decides the connected $P_3$-game for trees simplifies 
a little bit in case the tree is a caterpillar. In this section we 
shortly describe this simplification. 

\bigskip 

The dominating path of the caterpillar is called the backbone and 
we denote it by $P$. Assume that 
$P$ is a path with $n$ vertices, $P \simeq P_n$. 
We assume that the endpoints of the backbone are 
leaves of the caterpillar. The vertices of the backbone are numbered, 
$1,\dots,n$. Let the number of feet adjacent to the point $i$ be 
denoted as $h(i)$. 

\bigskip 

\begin{theorem}
There exists an efficient algorithm to decide if the first player 
has a winning strategy in the connected $P_3$-game played on a 
caterpillar. 
\end{theorem}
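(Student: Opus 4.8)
The plan is to run the tree algorithm of Theorem~\ref{thm connected P_3 on trees} and observe that on a caterpillar its dynamic program collapses to a single sweep along the backbone. By Theorems~\ref{thm Grundy value} and~\ref{thm Grundy} it suffices to compute Grundy values, and, exactly as in the proof of Theorem~\ref{thm connected P_3 on trees}, the Grundy value of a playground equals the nim-sum of the Grundy values of its branches. The three steps are: (i) describe the branches that can occur on a caterpillar; (ii) compute the Grundy values of the non-trivial ones by a left-to-right recurrence along the backbone; and (iii) combine these over all opening moves of player one.

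For step~(i), a playground on the caterpillar is a connected sub-caterpillar, hence is a contiguous backbone segment $i,i+1,\dots,j$ together with some of the feet attached to those vertices. Its branches come in three flavours. There is a \emph{left branch} rooted at $i$, present only when $i>1$: the sub-caterpillar on the backbone $1,\dots,i$ carrying the feet $h(1),\dots,h(i-1)$, with no foot at $i$ and $i$ already labeled; call this fragment $\mathcal{C}_i$. There is a symmetric \emph{right branch} rooted at $j$, a suffix fragment $\mathcal{C}'_j$. Finally, every foot of a backbone vertex of $\{i,\dots,j\}$ that is not yet labeled spans, with its labeled backbone vertex, a branch isomorphic to a $P_2$ with one endpoint labeled; from such a position there is a single move to a sink, so that branch has Grundy value $1$, and if $m$ of them are present they jointly contribute $m\bmod 2$. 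Hence the Grundy value of the playground is $g(\mathcal{C}_i)\oplus g(\mathcal{C}'_j)\oplus(m\bmod 2)$, where the first two terms are dropped when $i=1$, respectively $j=n$.

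For step~(ii), it remains to compute $g(\mathcal{C}_a)$ for $a=1,\dots,n$, and the $g(\mathcal{C}'_a)$ symmetrically. One checks that every playground reachable in the game on $\mathcal{C}_a$ is again of the form handled in step~(i), with right endpoint $a$, so step~(i) applies \emph{inside} $\mathcal{C}_a$ as well. The only legal opening moves from the start position $\{a\}$ of $\mathcal{C}_a$ are: label $a-1$; label a foot of $a-1$ (if $h(a-1)\ge1$); or label $a-2$ (if $a\ge3$) --- in each case $\sigma$ pulls in the intermediate backbone vertex, and the outcome is again a fragment of the step~(i) form whose left branch is $\mathcal{C}_{a-1}$ or $\mathcal{C}_{a-2}$. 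Reading off step~(i) then expresses $g(\mathcal{C}_a)$ as the $\mex$ of $g(\mathcal{C}_{a-1})\oplus(h(a-1)\bmod 2)$, of $g(\mathcal{C}_{a-1})\oplus((h(a-1)-1)\bmod 2)$ when $h(a-1)\ge1$, and of $g(\mathcal{C}_{a-2})\oplus((h(a-1)+h(a-2))\bmod 2)$ when $a\ge3$, with base value $g(\mathcal{C}_1)=0$. Since each $g(\mathcal{C}_a)$ is a $\mex$ over at most three values, all of them --- and all the nim-sums that arise --- stay bounded by a small constant, and the whole table is filled in $O(n)$ arithmetic operations. For step~(iii), the initial empty playground has Grundy value $\mex$ over all opening moves $x$ of the Grundy value of the one-vertex playground $\{x\}$. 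If $x$ is the backbone vertex $i$, step~(i) gives $g(\{x\})=g(\mathcal{C}_i)\oplus g(\mathcal{C}'_i)\oplus(h(i)\bmod 2)$; if $x$ is a foot at $i$, the game from $\{x\}$ has only the $O(1)$ moves reaching $i$ or a neighbour of $i$, each leading to a playground of the step~(i) form whose value is a nim-sum of already-tabulated quantities, so $g(\{x\})$ is obtained in constant time. By Theorem~\ref{thm Grundy value} the first player wins exactly when this outer $\mex$ is nonzero, i.e. exactly when some opening move $x$ has $g(\{x\})=0$.

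The step that needs genuine care is~(i) --- together with the claim used in~(ii) that the reachable playgrounds keep this shape: one must check precisely which vertices lie within distance two of a newly labeled vertex, verify that their $P_3$-closure stays connected, and thereby confirm that the branch decomposition of \emph{every} reachable playground really is ``one prefix fragment, one suffix fragment, and a parity bit.'' The degenerate small fragments ($a\le2$, backbone vertices carrying zero or many feet, the two endpoints being leaves) must be inspected separately. None of this is structural: granted the branch-decomposition principle from the tree case, it is a finite case analysis, and boundedness of the Grundy values then makes the running time polynomial.
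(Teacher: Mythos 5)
Your proposal is correct and follows essentially the same route as the paper: decompose every reachable playground into a prefix caterpillar, a suffix caterpillar, and a parity bit contributed by the unlabeled feet, then compute Grundy values by dynamic programming along the backbone over the $O(n)$ fragments rooted at a labeled backbone vertex. You merely make explicit the $\mex$-recurrence for the fragments $\mathcal{C}_a$ and the bounded-value observation that the paper leaves implicit.
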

\begin{proof}
We analyze all possible positions after two moves have been made. 
First assume that the first and second move are the selection 
of two feet $x$ and $y$ adjacent to the same vertex $i$ in the  $P$. 
In that case, the $P_3$-closure is the $P_3$, $\{x,y,i\}$. 
The remaining game is split into a caterpillar with backbone 
$P[\{1,\dots,i\}]$, one caterpillar with backbone $P[\{i,\dots,n\}]$, 
and $h(i)-2$ edges with $i$ as an endpoint. 
The Grundy value for this position is the nim-sum of the 
games described above, with initially labeled vertex $i$. 
Note that the nim-sum of the $h(i)-2$ edges with labeled vertex $i$ is 
\[\oplus_{\ell =1}^{h(i)-2} \; 1 = 
\begin{cases} 
1 & \text{if $h(i)$ is even}\\
0 & \text{otherwise.}
\end{cases}\] 

\medskip 

\noindent
Assume that the first move is the selection of a foot $x$ adjacent to 
$i$, and the second move is the selection of the vertex $i$. 
The remaining game is split into a game 
on a caterpillar with backbone $P[\{1,\dots,i\}]$, one caterpillar 
with backbone $P[\{i,\dots,n\}]$, and $h(i)-1$ games on edges 
with endpoint $i$. 

\medskip 

\noindent 
Assume that the first move is the selection of a foot $x$ adjacent to 
$i$ and the second move is the selection of the vertex $i+1$. 
The $P_3$-closure is $\{x,i,i+1\}$, and the game is split into 
a game on the caterpillars with backbones $P[\{1,\dots,i\}]$ and 
$P[\{i+1,\dots,n\}]$, the $h(i)-1$ edges consisting of the remaining leaves 
adjacent to $i$ and the 
$h(i+1)$ edges connecting leaves to $i+1$. 

\medskip 

\noindent
The first two moves are the selection of vertices $i$ and $i+1$. 
The game is split into two caterpillars, one with backbone 
$P[\{1,\dots,i\}]$ and one with backbone $P[\{i+1,\dots,n\}]$, 
and $h(i)$ and $h(i+1)$ edges connecting leaves with $i$ and $i+1$ 
respectively. 

\medskip 

\noindent 
The remaining case is where two vertices $i$ and $i+2$ of the 
backbone are selected. The game is split, similar as to that 
described above. 

\medskip 

\noindent
It follows that the game can be decided by an algorithm that performs 
dynamic programming on caterpillars with backbones that are subpaths 
of $P$ with one end a labeled vertex of $P$ and the other end the point 
$1$ or $n$. This algorithm can be implemented to run in $O(n)$ time; 
precisely speaking, it is the number of these subpaths of $P$. 

\medskip 

\noindent
This proves the theorem. 
\qed\end{proof}

\section{The connected $P_3$-game on chordal graphs}

\begin{definition}
A graph is chordal if it has no induced cycle of length more than three. 
\end{definition}

The following lemma is easily checked. For a proof see, eg, Centeno 
et al. 

\begin{lemma}
If $x$ and $y$ are two vertices at distance at most two in a biconnected 
chordal graph $G$, then 
\[\sigma(\{x,y\})=V(G).\]
\end{lemma}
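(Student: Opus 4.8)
The plan is to reduce the statement to the case where $x$ and $y$ are adjacent, and then to prove that case by induction on $|V(G)|$ by peeling off a simplicial vertex. For the reduction, suppose $x$ and $y$ are at distance two and let $z$ be a common neighbour. Then $|N(z)\cap\{x,y\}|=2$, so $z\in\sigma(\{x,y\})$ and hence $\{x,z\}\subseteq\sigma(\{x,y\})$; by properties (iii) and (iv) of the closure operator this gives $\sigma(\{x,z\})\subseteq\sigma(\sigma(\{x,y\}))=\sigma(\{x,y\})$. Since $x$ and $z$ are adjacent, it therefore suffices to prove: if $xy$ is an edge of a biconnected chordal graph $G$, then $\sigma(\{x,y\})=V(G)$.

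I would prove this by induction on $n=|V(G)|$. If $G$ is complete (this covers $n\le 3$) the claim is immediate: for $n\ge 3$ a common neighbour of $x$ and $y$ lies in the hull, and then every other vertex has both $x$ and $y$ as neighbours and joins the hull as well. If $G$ is not complete, then $n\ge 4$, and by Dirac's theorem $G$ has two non-adjacent simplicial vertices; since $x$ and $y$ are adjacent, at least one of these, call it $v$, lies outside $\{x,y\}$.

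The crux is to observe that $G-v$ is again biconnected. Indeed, $N_G(v)$ is a clique because $v$ is simplicial, and $\deg_G(v)\ge 2$ because $G$ is biconnected; so if some vertex $w$ were a cutvertex of $G-v$, the non-empty clique $N_G(v)\setminus\{w\}$ would lie inside a single component of $(G-v)-w$, and consequently $w$ would also be a cutvertex of $G$ — a contradiction. Hence $G-v$ is a biconnected chordal graph that still contains the edge $xy$, and by the induction hypothesis $\sigma_{G-v}(\{x,y\})=V(G)\setminus\{v\}$. Since $N_{G-v}(u)\subseteq N_G(u)$ for every $u\neq v$, every vertex added to the hull by the process in $G-v$ is also added by the process in $G$, so $\sigma_G(\{x,y\})\supseteq V(G)\setminus\{v\}$; finally $v$ has $\deg_G(v)\ge 2$ neighbours, all now in the hull, so $v$ enters it too, and $\sigma_G(\{x,y\})=V(G)$. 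The step I expect to need the most care is precisely this claim that deleting a simplicial vertex preserves biconnectedness, together with the bookkeeping (non-completeness of $G$, Dirac's two-simplicial-vertices theorem) that lets the simplicial vertex be chosen away from $\{x,y\}$.

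A second, Dirac-free route is also available. First note that $S:=\sigma(\{x,y\})$ is connected, since it is grown from the edge $xy$ by repeatedly attaching vertices that already have neighbours in it. Suppose $S\neq V(G)$ and let $C$ be a component of $G-S$; biconnectedness guarantees that $C$ has at least two neighbours in $S$. Since $C$ and $S$ are each connected, there exist cycles formed by concatenating a path between two vertices of $S$ with all internal vertices in $C$ and a path between the same two vertices inside $S$; take one of minimum total length. Using that every vertex of $C$ has at most one neighbour in $S$, one checks this cycle has length at least $4$; chordality then forces a chord, and every possible chord — inside the $C$-part, inside the $S$-part, or crossing from one to the other (the crossing case again invoking the single-neighbour property) — yields a strictly shorter admissible cycle, a contradiction. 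Hence $S=V(G)$. The delicate point of this route is the exhaustive chord analysis and verifying that each type of chord really produces a shorter admissible configuration.
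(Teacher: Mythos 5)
Your proposal is correct, but note that the paper itself offers no argument for this lemma: it simply declares it ``easily checked'' and points to Centeno et al.\ for a proof. So there is nothing in the paper to match your argument against line by line; what you have done is supply the missing proof, twice over. Both of your routes are sound. The reduction to the adjacent case is clean: a common neighbour $z$ of $x$ and $y$ lies in every $P_3$-closed superset of $\{x,y\}$, so $\sigma(\{x,z\})\subseteq\sigma(\{x,y\})$ by idempotence and monotonicity. Your inductive argument via Dirac's theorem is correct, and the one step you rightly flag as delicate --- that deleting a simplicial vertex preserves biconnectedness --- is argued properly: the clique $N_G(v)\setminus\{w\}$ is nonempty (minimum degree at least two in a biconnected graph) and sits in one component of $(G-v)-w$, so a cutvertex of $G-v$ would be a cutvertex of $G$. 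The monotonicity of the infection process under adding edges/neighbours ($N_{G-v}(u)\subseteq N_G(u)$) then finishes the step. Your second, Dirac-free route via a minimum ``admissible'' cycle is essentially the argument one finds in the Centeno et al.\ line of work on irreversible $2$-conversion: the facts that each vertex outside $S=\sigma(\{x,y\})$ has at most one neighbour in $S$ (by $P_3$-closedness) and that a component of $G-S$ has at least two neighbours in $S$ (by biconnectedness) force a chordless cycle of length at least four. Either route would serve as a self-contained replacement for the paper's citation; the first is arguably cleaner to write out in full, since the chord case analysis in the second requires the care you anticipate.
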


\begin{theorem}
There exists a polynomial-time algorithm to decide the connected 
$P_3$-game on connected chordal graphs. 
\end{theorem}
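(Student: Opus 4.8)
The plan is to reduce the game on a connected chordal graph $G$ to a game on its \emph{block-cut tree}, in close analogy with the algorithm for trees (Theorem~\ref{thm connected P_3 on trees}). Recall that each block of $G$ is either a bridge, that is a copy of $K_2$, or a biconnected chordal subgraph with at least three vertices. The preceding lemma is the engine of the reduction: it says that inside a biconnected block $B$, any two vertices at distance at most two already hull to all of $V(B)$. Since a legal move labels a vertex at distance at most two from the currently labeled set $L$, the first move that places a second vertex of $B$ into $\sigma(L)$ forces $\sigma$ to swallow the whole of $B$ in a single stroke; moreover every cut vertex of $B$ becomes labeled, so the blocks hanging off those cut vertices are simultaneously exposed as fresh branches. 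Thus, from the point of view of the game, crossing into a biconnected block behaves like traversing one super-edge of the block-cut tree that instantly consumes the entire block.

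First I would make the block-collapse claim precise by checking it in every entry configuration. The delicate point is that a block $B$ can be entered through a cut vertex $v$ either after $v$ is labeled, or before, when $v$ still sits at distance one from $L$ on the far side; in the latter case a vertex $w \in B$ adjacent to $v$ may be labeled, and one verifies that $\sigma(L+w)$ already contains $v$, so $v$ and $w$ are two vertices of $B$ at distance one and the lemma again closes $B$. One also checks that a vertex inside $B$ that is legal to play is necessarily within distance two of $v$, because every path from $L$ to the interior of $B$ runs through $v$; this is what guarantees the hypothesis of the lemma is met.

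Next I would set up the Grundy computation exactly as in the tree case. A playground corresponds to a connected subtree of the block-cut tree whose boundary vertices are labeled. At any labeled cut vertex the pending games on the distinct branches that hang off it do not interact, so the position is the product of these subgames; by Grundy's Theorem~\ref{thm Grundy} its Grundy value is the nim-sum of the branch values, and the branch values themselves are built from those of smaller branches via the $\mex$-recurrence. A bridge contributes the same way an edge does for trees, while a biconnected block contributes a single collapsing move followed by the nim-sum of the branches it exposes. The algorithm computes these branch values by dynamic programming over the polynomially many rooted sub-branches, enumerates the first move (and, where the closure of the first two moves must be resolved, the second), and then invokes Theorem~\ref{thm Grundy value} to decide whether $g(s) \neq 0$.

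The main obstacle I expect is proving that the branches are genuinely independent subgames, so that Grundy's Theorem applies verbatim. The distance-two reach of a move can couple two adjacent blocks: a single move may label a vertex in one block while its $P_3$-closure creeps across a shared cut vertex into a neighbouring block. The crux of the argument is to show that such coupling only ever occurs through a cut vertex, that it labels that cut vertex, and that once the cut vertex is labeled the branches on either side decouple again, so that no move simultaneously advances two independent branches. Establishing this cleanly, together with the block-collapse claim, yields the product structure and hence the polynomial-time algorithm.
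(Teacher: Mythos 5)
Your proposal is correct and takes essentially the same route as the paper: enumerate the initial moves, use the lemma to collapse any biconnected block as soon as two of its vertices enter the closure, decompose the unlabeled remainder at labeled cut vertices into independent rooted subgames whose nim-sum gives the Grundy value via the Sprague--Grundy theorem, and evaluate everything by dynamic programming over the biconnected components. If anything, your write-up makes explicit the decoupling and block-collapse verifications that the paper's very terse proof leaves implicit.
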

\begin{proof}
The algorithm considers all possible initial playgrounds after the 
first two moves have been made. If the graph is not connected, 
then the vertices that are not in the initial playground, are 
partitioned into components for which the neighborhood 
is a cutvertex. Consider the components, rooted at the cutvertices. 
By the Sprague-Grundy theorem, the Grundy value of the playground is the 
nim-sum of the games played at these components. It follows that this is 
computable in polynomial time by dynamic programming on the biconnected 
components.\footnote{For the definition and properties of biconnected component, we refer the readers to {\tt https://en.wikipedia.org/wiki/Biconnected\_component}.} 
\qed\end{proof}

\section{The connected $P_3$-game on cacti}

In this section we prove that the connected $P_3$-game on cacti is decidable in 
polynomial time. 

\begin{definition}
A cactus is a graph without the diamond as a minor. 
\end{definition}
Equivalently, a graph is a cactus if every edge is a subset of at 
most one cycle of the graph. Also, a graph is a cactus if every biconnected 
component is an edge or a cycle. 

\bigskip 

\begin{theorem}
There exists a polynomial-time algorithm to decide the connected 
$P_3$-game on cacti. 
\end{theorem}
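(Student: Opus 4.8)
The plan is to combine the biconnected-component decomposition used for chordal graphs with the cycle and path analyses already established, since in a cactus every biconnected component is either a single edge or a cycle, and distinct biconnected components meet only at cutvertices. First I would set up the block-cut tree of the cactus and observe that, once the first two moves fix an initial connected playground, the unlabeled part decomposes into subgraphs hanging off cutvertices; by the Sprague--Grundy theorem (Theorem~\ref{thm Grundy}) the Grundy value of the whole position is the nim-sum of the Grundy values of these independent subgames, exactly as in the chordal case. The dynamic program would process the block-cut tree from the leaves inward, computing for each block a Grundy value that depends on which of its cutvertices are already labeled.

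The key new ingredient, relative to the chordal argument, is that the blocks here can be long induced cycles rather than cliques, so the ``two labeled vertices close up the whole block'' shortcut no longer applies. Instead I would compute the Grundy value of a cycle-block with one or two prescribed vertices labeled by reusing the path recurrence $f(k)=\mex\{f(k-1),f(k-2)\}$ from the connected $P_3$-game on paths: a labeled vertex (or a pair of labeled vertices) on a cycle cuts it into one or two unlabeled arcs, each of which is a path with one labeled endpoint, and these arcs are independent subgames whose values combine by nim-sum. The edge-blocks contribute Grundy value $1$ each (an edge with one labeled endpoint is a single winning move), and a cutvertex shared by several blocks becomes labeled simultaneously for all of them, so the contributions of the incident blocks are combined with $\oplus$.

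I would then assemble the global algorithm exactly as for chordal graphs: enumerate all $O(n^2)$ choices for the first two moves, and for each resulting connected initial playground $L$ with $G[L]$ connected, decompose the complement into independent subgames rooted at cutvertices and take the nim-sum of their precomputed Grundy values. By Theorem~\ref{thm Grundy value}, player one wins from a given opening exactly when this nim-sum is nonzero, and player one has a winning strategy overall iff some opening yields a winning second position. Since there are $O(n)$ blocks, each handled in time polynomial in its size via the path recurrence, and $O(n^2)$ openings, the total running time is polynomial.

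The main obstacle I anticipate is the bookkeeping at cutvertices that lie on a cycle-block: when such a cutvertex becomes labeled during play, it simultaneously activates the pendant subgames hanging off it \emph{and} advances the arc subgames within its own cycle, so the state of a block-subgame must record precisely which of its attaching cutvertices are already in the $P_3$-closure and how the induced arcs are partitioned. Getting this interface between the intra-block arc recurrence and the inter-block nim-sum consistent—so that no vertex is double-counted and the connectivity constraint of the connected game is respected at every partial position—is the delicate part; once the per-block Grundy values are correctly tabulated as functions of their labeled boundary, the rest follows the chordal template.
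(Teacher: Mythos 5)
Your global template---decompose at cutvertices, combine independent subgames by nim-sum, recurse on blocks---matches the paper's proof. But the key technical ingredient you propose for cycle blocks is wrong. In the connected game the playground meets a cycle block in a \emph{contiguous} arc (in a cactus, any path between two vertices of a cycle must stay inside that cycle, else some edge would lie on two cycles), so the unlabeled part of the block is a single arc \emph{both} of whose ends border the playground. That game is not the one-labeled-endpoint path game $f(k)=\mex\{f(k-1),f(k-2)\}$, nor a nim-sum of two such path games: the two growing fronts draw from the same pool of vertices and interact through the closure rule---when one unlabeled vertex remains with both neighbors labeled it is absorbed for free, which is exactly why the paper's cycle analysis must leave $f(n-1)$ undefined. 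Concretely, take $C_4$ with one labeled vertex $x$ and unlabeled arc $v_1v_2v_3$. Playing $v_2$ ends the game at once (both $v_1$ and $v_3$ acquire two labeled neighbors), while playing $v_1$ or $v_3$ leaves a position of value $1$; hence the Grundy value is $\mex\{0,1\}=2$. Your path formula gives $f(4)=0$, and a split into two one-ended paths gives, e.g., $f(2)\oplus f(3)=1\oplus 2=3$. Either way the tabulated block values are wrong, and the dynamic program built on them decides the game incorrectly.

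The second gap is the one you flag yourself but do not resolve: the nim-sum split at a cutvertex $y$ lying on a cycle is legitimate only \emph{after} $y$ is labeled. Before that, a single move can advance two blocks simultaneously---a vertex $z$ of a pendant block adjacent to $y$ is at distance two from the playground, and playing it absorbs $y$ into the closure, extending the arc inside the cycle and opening the pendant game in one move; so the subgames are not independent and Grundy's product theorem does not apply to that position. The paper's proof resolves both issues at once: it enumerates the $O(|R|^2)$ arc playgrounds on each cycle $R$, computes their Grundy values by a $\mex$ recursion over extensions by one or two vertices at either end of the arc, and splits off pendant games only at the moment a cutvertex $y$ enters the playground, giving those games initial playground either $\{y\}$ or an edge incident with $y$ (the latter accounting precisely for the double move through $y$). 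Replacing your path recurrence and arc nim-sums by this arc-indexed dynamic program is what is needed to make your argument go through; the rest of your outline then coincides with the paper's.
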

\begin{proof}
We show that the Grundy value is computable in polynomial time. 

\medskip 

\noindent 
The algorithm tries all possible vertices as a first move for player one. 
Consider a vertex $x$ as a first move. Assume that $x$ is in a cycle and 
let $R$ be a cycle containing $x$. 
First assume that $x$ is a cutvertex. 
Consider the components of $G-x$ augmented with the vertex $x$. 
By the Sprague-Grundy theorem, the Grundy value is the nim-sum 
of the Grundy values of the games played on the augmented components. 

\medskip 

\noindent
For the augmented component that contains $R$, the children of 
the playground $\{x\}$ are the playgrounds that are edges that have $x$ 
as an endpoint and those that are $P_3$'s with $x$ as an endpoint. 
For each of these playgrounds, if it contains a cutvertex $y$, the game 
is split into components of $G-y$ augmented with $y$. The Grundy value is 
the nim-sum of these games. For each of the components that do not contain 
vertices of $R \setminus \{y\}$, the algorithm recursively computes the 
Grundy value. Notice that the initial playground for each component 
is either a single vertex $y$, or an edge incident with $y$.  
When it is an edge, 
the playground extends greedily into maximal biconnected chordal subgraph. 

\medskip 

\noindent 
There are only $O(|R|^2)$ different playgrounds induced on $R$.    
For each playground $r$ on $R$ the algorithm computes the $\mex$ value 
from its children. If a child includes a cutvertex, the game is split. For 
the augmented components that share a cutvertex with $R$, with a playground 
that is either the cutvertex, or an edge incident with the cutvertex, 
the Grundy value 
is computed recursively. 
\qed\end{proof}

\section{The connected $P_3$-game on outerplanar graphs}

A graph is outerplanar if it has a plane embedding with all vertices 
lying on the outerface. 
Alternatively, outerplanar graphs are defined as follows. 

\begin{definition}
A graph is outerplanar if it does not contain $K_4$ or $K_{2,3}$ as a minor.
\end{definition}

\bigskip 

\begin{theorem}
There exists a polynomial-time algorithm that decides the connected 
$P_3$-game on outerplanar graphs. 
\end{theorem}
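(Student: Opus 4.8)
The plan is to mimic the strategy that succeeded for cacti, but to deal with the richer block structure of outerplanar graphs. The key structural fact I would exploit is that a biconnected outerplanar graph has a natural ``weak dual'' tree whose nodes are the inner faces and whose edges correspond to the chords shared between adjacent faces. Combined with the earlier lemma that in a biconnected chordal graph the hull of any two vertices at distance at most two is everything, I expect the $P_3$-closure to behave very locally: starting from a labeled vertex or edge, the closure spreads across the faces it touches, and a block that is a single cycle behaves exactly like the cycle analysis already carried out, while a block that is chordal collapses immediately.

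First I would reduce, as in the chordal and cactus proofs, to the biconnected components. The algorithm tries all $O(n^2)$ initial playgrounds after player one's and player two's first moves; for any resulting playground the graph $G$ minus the closure falls apart into pieces each attached to the labeled set through a cutvertex, and by the Sprague--Grundy theorem (Theorem~\ref{thm Grundy}) the Grundy value of the whole playground is the nim-sum of the Grundy values of these rooted pieces. So it suffices to show that, for a single biconnected outerplanar block rooted at a cutvertex (initial playground a single vertex or an incident edge), the Grundy value is computable in polynomial time. I would then root the weak-dual tree at the face(s) containing the initial labeled set and process faces in a leaf-to-root order, carrying for each subtree a bounded amount of boundary information: which boundary vertices are already in the closure and the parity/residue data that the cycle computation shows is all that matters.

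The heart of the argument is a dynamic program over this dual tree. For each face, which is bounded by a cycle together with shared chords, a legal move extends the closure by one or two consecutive boundary vertices, exactly as in the cycle case, except that whenever the closure captures two vertices at distance at most two inside a biconnected chordal sub-block, that entire sub-block is swallowed in one step by the biconnected-chordal lemma. I would argue that the number of genuinely distinct playground ``shapes'' restricted to one face is polynomial (indeed $O(|R|^2)$ per cycle, as already observed for cacti), so the $\mex$ computation at each node is cheap, and the state passed up the dual tree is a constant-size descriptor per shared chord. Multiplying the per-node cost by the number of faces and summing over the $O(n^2)$ initial configurations yields a polynomial-time algorithm, and Theorem~\ref{thm Grundy value} then converts the computed Grundy value at the start node into the winning-strategy decision.

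The main obstacle I anticipate is bounding the state that must cross a shared chord between two faces. In a cycle or a cactus, a cutvertex is a single point and the closure interacts with the rest of the graph through that one vertex, so the rooted-subgame abstraction is immediate; in a biconnected outerplanar block, however, two faces share a chord (two vertices plus an edge), and the closure can enter an adjacent face through either endpoint or through both, so the ``interface'' is no longer a single cutvertex and the subgames are not cleanly independent. The technical work will be to verify that, because a chord together with one more captured boundary vertex already triggers the biconnected-chordal collapse, only finitely many interface states are reachable, so that the dual-tree dynamic program remains well-defined and polynomial. Establishing this bounded-interface property rigorously, rather than the bookkeeping of the recurrence itself, is where the real effort lies.
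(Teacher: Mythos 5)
Your overall architecture --- decompose the biconnected outerplanar block along its chords (your weak dual tree is exactly the paper's ``tree of cycles''), try all opening moves, and combine rooted subgames by nim-sum --- matches the paper's. But the step you yourself flag as the crux is where the plan breaks, and the justification you offer for it is false. The biconnected-chordal lemma does not give you bounded interfaces: a chord $\{u,v\}$ of a face of length at least four, plus one further labeled boundary vertex, swallows nothing. Concretely, if a face has boundary $u,a,b,c,v$ together with the chord $uv$, then the closure of $\{u,v,a\}$ is $\{u,v,a\}$ itself --- no unlabeled vertex has two labeled neighbors. The collapse you invoke happens only when the faces involved are triangles (this is exactly the paper's remark that the playground ``greedily extends'' when the cycle at a separator is a triangle); it cannot be used to bound interface states in general.

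The deeper problem is that a ``constant-size descriptor per shared chord'' cannot work in the form you want, because the two sides of a chord $\{u,v\}$ are not independent subgames while only one endpoint, say $u$, is labeled: a move on the far side that labels a far-side neighbor $w$ of $v$ gives $v$ two labeled neighbors ($u$ and $w$), hence labels $v$, and the closure then propagates back into the near side. Sprague--Grundy addition (Theorem~\ref{thm Grundy}) applies only to sums of \emph{independent} games, so a dual-tree dynamic program that nim-sums per-face values against a static interface label is not sound; the entanglement is one of timing, not just of which endpoints are labeled. The paper sidesteps this differently: it splits the game at a chord only when the playground already contains \emph{both} endpoints of that chord --- at that point the two augmented components really are independent and nim-sum applies --- and, before any split occurs, it observes that the playground is a simple path along the outerface, so only $O(n^2)$ global playground states arise; for each augmented component at a separator $e$ it then needs the Grundy value of just two boundary playgrounds ($e$ itself, and $e$ plus one vertex adjacent to an endpoint of $e$), computed recursively. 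Your plan needs either this ``split only when the chord is fully labeled'' device, or a genuinely new argument taming the entangled interface; as written, the gap is real.
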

\begin{proof}
The proof is similar to that for the cacti. 
For ease of description, we 
assume that $G$ is a biconnected outerplanar graph. 
The algorithm that we describe extends in an obvious manner 
for cases when $G$ contains cutvertices. 
When $G$ is biconnected, it forms a `tree of cycles,' which can 
be defined recursively as follows.  Any cycle is a biconnected 
outerplanar graph. A biconnected outerplanar graph $G^{\prime}$, 
with an outerface $O$, and a cycle $C$, a new biconnected outerplanar 
graph is formed by identifying the endpoints of an edge with the 
endpoints of an edge in $O$. This recursively defines all biconnected 
outerplanar graphs. 

\medskip 

\noindent
All minimal separators are edges and for each edge $e$, $G-e$ contains 
exactly two components. We call the components with the edge $e$ added 
to it, the augmented components at the edge $e$. 

\medskip 

\noindent 
For each separator $e$, and for each augmented component $C$ at $e$, 
the algorithm recursively computes the Grundy value for the playground 
that consists of $e$, and for the playground consisting of $e$ plus 
one additional vertex adjacent to an endpoint of $e$. 
When the cycle incident with $e$ is a triangle, the playground 
greddily extends. 

\medskip 

\noindent 
To process $G$, the algorithm tries all vertices $x$ as an initial 
playground $p$. The children of $p$ are edges incident with $x$ and 
$P_3$'s with $x$ as an endpoint. When a child contains a minimal 
separator, the game splits into two games played on the augmented 
components. In that case, the Grundy value is the nim-sum of the two 
games on the augmented components. 

\medskip 

\noindent 
Consider a playground $p$ that contains only edges of the outerface $O$. 
Then $p$ is a simple path. 
The children are all extensions of $p$ with an edge or a $P_3$. An extension 
$q$ that contains a minimal separator splits the game, and the Grundy value 
is the nim-sum of the two augmented components, each with the 
induced paths 
$q_1$ and $q_2$ as a playground. 

\medskip 

\noindent 
For the children $q$ that extend $p$ with edges along $O$, the 
Grundy value is computed by dynamic programming. Finally, for the 
playground $p$, the Grundy value is computed as the $\mex$ function 
of the Grundy values of its children. 

\medskip 

\noindent 
This proves the theorem. 
\qed\end{proof}

\section{The connected $P_3$-game on circular arc graphs}

Gavril initiated the research on circular-arc graphs. 
These graphs are the intersection graphs 
of arcs on a circle. McConnell showed that this class 
can be recognized in linear time.  Whilst intervals 
on the real line satisfy the Helly property, this is 
no longer true for circular arcs on a circle. That is, 
there could be a triangle in the graph without 
any point on the circle that is in all three arcs. 

\bigskip 

Gavril defines a  Helly circular-arc graph as a graph 
for which the clique matrix has the circular 1s property 
for columns. Also the Helly circular-arc graphs are recognizable 
in linear time. 

\begin{definition}
A graph is a Helly circular-arc graph if it has a 
circular maximal clique arrangement.
\end{definition}

\begin{theorem}
There exists a polynomial-time algorithm 
to decide the connected $P_3$-game on Helly circular-arc graphs.
\end{theorem}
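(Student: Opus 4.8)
The plan is to mimic the strategy already used for chordal graphs, cacti, and outerplanar graphs, replacing the block-cut tree and the tree-of-cycles decomposition by the circular maximal clique arrangement $Q_1,\dots,Q_m$ guaranteed by the definition of a Helly circular-arc graph. First I would dispose of cutvertices exactly as before: when $G$ has a cutvertex, decompose into biconnected components, root them at the cutvertices, and combine Grundy values by the nim-sum (Sprague--Grundy, Theorem~\ref{thm Grundy}); this reduces the problem to the case that $G$ is biconnected. If, in addition, some point of the circle is covered by no arc, then $G$ is an interval graph, hence chordal, and we are done by the chordal case. So the only genuinely new situation is a biconnected Helly circular-arc graph whose arcs cover the entire circle.

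The key observation I would isolate is a locality lemma: for any contiguous range $[a,b]$ of the circular clique arrangement that is not the whole circle, the subgraph induced on the vertices whose arcs lie inside that range admits an interval representation, hence is chordal; combined with the earlier lemma on biconnected chordal graphs, this shows that once the labeled set $L$ meets two consecutive cliques inside such a range, the $P_3$-closure swallows the whole biconnected chordal piece between them. Consequently a connected $P_3$-closed playground is, up to its absorbed interior, described by very little data: the two boundary cliques it currently reaches, the labeled/unlabeled state near those boundaries (just as the function $f(k)$ in the path and cycle analyses tracked whether an endpoint was labeled), and a flag recording whether the closure has wrapped all the way around. There are only $O(m^2)$ such succinct playgrounds, and every move either advances a boundary clique, triggers a wrap-around --- after which there are no more long arcs to worry about and the residual game splits into independent subgames on interval-graph pieces, handled as in the chordal case --- or splits the game across a minimal separating edge, the parts combining by nim-sum. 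Dynamic programming over these $O(m^2)$ playgrounds, trying all $O(n)$ first moves for player one, and invoking Theorem~\ref{thm Grundy value}, then decides the game in polynomial time.

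The main obstacle is the precise behaviour of the $P_3$-closure on the circular structure, and in particular the interplay between the thin parts of the arrangement --- where, as with cycles, the closure may advance by only a single clique per move and the game locally resembles the cycle game --- and the thick parts, where two labeled vertices in adjacent cliques of size at least two instantly absorb a long stretch. One must show that every reachable playground really does collapse onto one of the $O(m^2)$ succinct types, with the transitions and nim-sum splits as described, and one must check the wrap-around step carefully: at the moment the advancing closure laps its own starting clique, verify that the resulting position is exactly the union of the claimed independent interval-graph subgames and leaves behind no unabsorbed long "wrap-around" vertex that would re-introduce the circular difficulty. Establishing this reduction is the heart of the argument; the enveloping dynamic program and the reduction to the biconnected case are routine repetitions of the earlier proofs.
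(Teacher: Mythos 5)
Your proposal follows essentially the same route as the paper: enumerate initial moves to cut the circle, treat the remainder as an interval graph whose maximal cliques are linearly ordered, describe each reachable connected playground by its two boundary cliques, and run dynamic programming with nim-sum splits over the resulting $O(m^2)$ positions. The one device the paper supplies that you leave open as your ``main obstacle'' is this: it enumerates the first \emph{two} moves and observes that the closure of two vertices at distance at most two always contains an entirely labeled maximal clique; cutting the circle at the point of that clique means every wrap-around arc passes through it and is therefore already labeled, which dissolves the wrap-around difficulty you identify as the heart of the argument.
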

\begin{proof}
Obviously, we may assume that the graph is connected. 

\medskip 

\noindent 
To reduce the circular problem to a linear one, the algorithm considers 
all possible first two moves.  

\medskip 

\noindent
When the first two vertices are selected, 
these two vertices and their common neighbors are labeled.  
The closure recursively 
labels all vertices in cliques that contain 
two previously labeled vertices.

\medskip 

\noindent
There is at least one maximal clique in which all vertices are labeled, and so, 
the graph can be regarded as an interval graph, i.e. a graph that has an 
linear order for its maximal cliques.

\medskip 

\noindent
There may exist one or two vertices that appear in  
(consecutive) maximal cliques on the lefthand side of the linear 
order and in the maximal cliques on the righthand side. 
These vertices do not appear in consecutive maximal cliques, 
but, since they are labeled, they can be ignored in subsequent procedures.

\medskip 

\noindent
The vertices that are not in the initial playground are 
partitioned into two types of components:
\begin{enumerate}[\rm (1)]
\item the components that have a neighborhood consisting of a 
single labeled cutvertex, 
and 
\item the components of which the neighborhood 
consists of two labeled cutvertices that border the 
consecutive cliques of the component. 
\end{enumerate}

\medskip 

\noindent 
For a component of the first type, 
it takes exactly one additional move to label all its vertices.
Components of the second type  
are handled recursively by considering 
all possible extra moves, 
rooted at distance at most two from either the cutvertex on 
the righthand side or the lefthand side.

\medskip 

\noindent
This shows that the Grundy values of components can be computed recursively. 
By dynamic programming, the program calculates the 
Grundy value of any sequence of 
components.  
The Grundy value for the circular arc graph is computed by considering 
all possible first two moves. 
\qed
\end{proof}


\bigskip

With a different analysis, the algorithm for Helly circular-arc graphs would accommodate general circular-arc graphs.
Instead of considering max cliques when analyzing the first two moves, we consider \emph{Helly cliques}, a maximal set of arcs overlapping in one point of the circle.
After the selection of the first two vertices, there is at least one Helly clique in which all vertices (arcs) are labeled.
For any such labeled Helly clique, it cuts the circle with respect to the labeling so that the circle becomes an arc $A$.
The rest of the input arcs (on $A$) can be regarded as intervals on a line.
Thus the graph of the rest of input becomes an interval graph and the algorithm for Helly circular-arc graphs proceeds.

\begin{theorem}
There exists a polynomial-time algorithm to decide the connected $P_3$-game on general circular-arc graphs.
\end{theorem}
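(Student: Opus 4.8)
The plan is to run the algorithm for Helly circular-arc graphs almost verbatim, with the single change that maximal cliques are replaced by \emph{Helly cliques}, as anticipated in the discussion preceding the statement. Fix an arc model for $G$ (one exists and, by McConnell's theorem, is computable in linear time); a Helly clique is the set of all arcs that contain a fixed point $q$ of the circle. Since the $2n$ arc endpoints cut the circle into $O(n)$ combinatorial positions for $q$, there are only $O(n)$ distinct Helly cliques, and they are easily enumerated. We may also assume $G$ is connected, since otherwise the game is the sum over components and we take a nim-sum.

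The first step is the structural claim that, after the first two moves $u$ and $v$, the $P_3$-closure $\sigma(\{u,v\})$ contains an \emph{entire} Helly clique. As the move selecting $v$ is legal, $u$ and $v$ are at distance at most two. If $u \sim v$, pick any point $q$ in the nonempty overlap $\mathrm{arc}(u)\cap\mathrm{arc}(v)$; every arc $x$ through $q$ meets both $\mathrm{arc}(u)$ and $\mathrm{arc}(v)$, so $x\in\{u,v\}$ or $x$ is a common neighbour of $u$ and $v$, and in the latter case $x$ has two neighbours in the $P_3$-closed set $\sigma(\{u,v\})$ and hence lies in it. If $u$ and $v$ are at distance exactly two, a common neighbour $w$ also lies in $\sigma(\{u,v\})$ (it has two neighbours there), and applying the previous argument to the edge $uw$ shows that every arc through a point $q\in\mathrm{arc}(u)\cap\mathrm{arc}(w)$ is a neighbour of both $u$ and $w$, hence again in $\sigma(\{u,v\})$. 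In either case the Helly clique at the chosen point $q$ is entirely labeled.

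Given such a labeled Helly clique at $q$, the second step is to cut the circle at $q$: the circle minus $q$ is an arc $A$, and every input arc not passing through $q$ becomes an ordinary interval on $A$. Thus the subgraph induced by the unlabeled vertices, together with the labeled cutvertices bordering them, is an interval graph, while the arcs through $q$ — all labeled — play exactly the role of the one or two ``wrap-around'' labeled vertices in the Helly circular-arc proof and are ignored thereafter. From here the algorithm is the one for Helly circular-arc graphs word for word: the unlabeled vertices split into components whose neighbourhood is either a single labeled cutvertex (finished in one more move) or a pair of labeled cutvertices bounding a run of consecutive cliques (handled recursively by trying all moves at distance at most two from one of the two cutvertices), and Grundy values of components and of sequences of components are combined by nim-sum and dynamic programming, justified by the Sprague--Grundy theorem. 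Running over all $O(n^2)$ choices of the first two moves, computing $\sigma(\{u,v\})$ in polynomial time for each, selecting a labeled Helly clique, invoking the interval-graph dynamic program, and taking the $\mex$ of the outcomes yields the Grundy value of the whole game; Theorem~\ref{thm Grundy value} then decides whether player one wins.

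The step that genuinely requires care is the structural claim of the second paragraph, precisely because general circular-arc graphs fail the Helly property: three pairwise-adjacent arcs need not share a point, so one cannot simply point to a triangle inside $\sigma(\{u,v\})$ and declare its common point. The correct move is to anchor the argument on a single overlapping \emph{pair} of arcs (an edge of $\sigma(\{u,v\})$, obtained directly when $u\sim v$ and via the common neighbour $w$ otherwise), since a pair of overlapping arcs always determines a common point of the circle. Once this is pinned down, the rest is a faithful transcription of the Helly circular-arc argument to interval graphs, and the polynomial running time follows from the bound proved there.
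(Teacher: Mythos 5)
Your proposal follows essentially the same route as the paper: replace maximal cliques by Helly cliques (arcs through a common point), observe that after the first two moves some Helly clique is fully contained in the $P_3$-closure, cut the circle there, and hand the resulting interval structure to the Helly circular-arc algorithm. In fact the paper only sketches this reduction and asserts the key structural claim without argument, whereas you supply a correct justification of it by anchoring on an overlapping pair of arcs (the edge $uv$, or $uw$ via a common neighbour $w$), which is exactly the right way to sidestep the failure of the Helly property.
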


\section{Concluding remark}

We are aware of only very few problems that are solvable in polynomial 
time on outerplanar graphs, and that resist efficient algorithms for 
graphs of treewidth two. At the moment we do not have an efficient 
algorithm to decide the connected $P_3$-game for graphs of treewidth two. 

\bigskip 

We are interested in the connected $P_3$-game on convex geometries. 
These are alignments satisfying the anti-exchange 
property. We are primarily interested in the question  
under which additional conditions the connected $P_3$-game becomes 
decidable on convex geometries  
(see~\cite{kn:chvatal,kn:edelman,kn:grier}).

\end{document}